\documentclass[11pt]{article}

\usepackage{amsmath,amssymb,amsthm}
\usepackage{hyperref}
\usepackage{geometry}
\usepackage{graphicx}
\usepackage{booktabs}

\hypersetup{colorlinks=true, linkcolor=blue, citecolor=blue, urlcolor=blue}

\newtheorem{definition}{Definition}
\newtheorem{theorem}{Theorem}

\title{\textbf{Message Passing Without Temporal Direction: Constraint Semantics and the FITO Category Mistake}}
\author{Paul Borrill \\ DÆDÆLUS \\ \texttt{paul@daedaelus.com}}
\date{\today}

\begin{document}
\maketitle

\begin{abstract}
Message passing is widely assumed to be a fundamental primitive of distributed systems. This paper argues that conventional message systems embed a category mistake: they misinterpret logical dependency relations as temporal propagation processes. This error arises from an implicit Forward-In-Time-Only (FITO) assumption, which treats causality as intrinsically directed along a temporal axis. We formalize FITO as the imposition of a partial order over events and show that clocks, scheduling, and message propagation are representational artifacts rather than ontological primitives. We then reformulate interaction in terms of symmetric constraint relations, identify the minimal substrate of interaction independent of temporal direction, and prove an equivalence theorem: under mild assumptions, a broad class of message-passing executions can be represented as constraint satisfaction problems, and conversely, constraint satisfaction instances can be realized as message-passing protocols. We connect the result to Lamport clocks, Hewitt actors, Pratt pomsets, category theory, relativity, and indefinite causal order, and interpret engineering consequences for reflective and reversible link architectures such as Open Atomic Ethernet.
\end{abstract}

\section{Introduction}

The phrase \emph{message passing} is deeply embedded in computer science. It appears in actor models, microservices, network protocols, OS kernels, and hardware interconnects. The phrase seems to describe an obvious physical story: a sender sends a message to a receiver, and the message travels forward in time.

This paper argues that this story is not a primitive. It is an interpretation layered on top of a weaker and more fundamental structure: relational constraints between states. The standard story introduces a directed arrow of influence aligned with an assumed arrow of time. In many settings that is a useful engineering convention. But when elevated to ontology, it becomes a category mistake: it treats logical relations as if they were temporal transport processes.

We call the hidden premise the Forward-In-Time-Only (FITO) assumption.

\section{From Messages to Ontology}

Ordinary discourse about messages imports several commitments:

\begin{enumerate}
\item There exists a sender and a receiver.
\item There exists a message object.
\item The message moves from sender to receiver.
\item The movement occurs in time, typically forward time.
\item The send operation occurs at a time chosen by a scheduler, implicitly or explicitly.
\end{enumerate}

Once ``send at time'' is admitted, clocks appear. Once clocks appear, synchronization appears. Once synchronization appears, drift, skew, and epistemic uncertainty appear. Each step is additional apparatus introduced to preserve the directional narrative.

The core claim of this paper is that much of this apparatus is representational. It is not required for the minimal notion of interaction.

\section{Formalizing FITO}

Let $E$ be a set of events.

A standard distributed-systems execution is commonly modeled by assigning a precedence relation over events.

\begin{definition}[FITO ordering]
A FITO ordering is a partial order $(E,\prec)$ intended to represent ``happens-before'' style precedence, where $\prec \subseteq E\times E$ satisfies:
\begin{enumerate}
\item Irreflexivity: for all $e\in E$, not $(e\prec e)$.
\item Transitivity: $e_1\prec e_2$ and $e_2\prec e_3$ implies $e_1\prec e_3$.
\item Antisymmetry (for strict orders this is implied by irreflexive+transitive, but we list it for clarity): not $(e_1\prec e_2 \ \wedge\ e_2\prec e_1)$.
\end{enumerate}
\end{definition}

The FITO interpretation asserts that causal influence is aligned with $\prec$ and that admissible system narratives must be forward along this relation.

\section{Constraint Systems Without Temporal Direction}

We now define a representation of interaction that does not presuppose temporal direction.

\begin{definition}[Constraint system]
A constraint system is a pair $(S,C)$ where:
\begin{enumerate}
\item $S$ is a set of local states (or valuations of variables).
\item $C$ is a compatibility relation that specifies which pairs (or tuples) of local states may coexist.
\end{enumerate}
In the binary case, $C\subseteq S\times S$.
\end{definition}

Often, the essential structure is symmetric.

\begin{definition}[Symmetric compatibility]
A binary compatibility relation $C\subseteq S\times S$ is symmetric if
\[
(s_1,s_2)\in C \implies (s_2,s_1)\in C.
\]
\end{definition}

\section{Category Mistake}

\begin{definition}[Category mistake in message ontology]
A category mistake occurs when a logical relation (compatibility, entailment, constraint) is treated as a temporal transport process (propagation of an object from past to future).
\end{definition}

Logical compatibility is not intrinsically directional. Temporal propagation is. FITO imports direction into the ontology.

\section{Clocks and Scheduling as Derived Structure}

A clock is typically introduced as a function $t:E\to \mathbb{T}$ for some totally ordered set $\mathbb{T}$, such that
\[
e_1\prec e_2 \Rightarrow t(e_1) < t(e_2).
\]
This shows that clocks encode an ordering that has already been imposed. They do not generate logical dependency. They provide an index.

Scheduling is then defined as a rule that selects which enabled event occurs next, often by reference to $t$ or to a local proxy for it.

From the constraint perspective, both are representational convenience.

\section{Event-Driven Systems}

Event-driven systems reduce continuous dynamics into discrete transitions. This can be a practical improvement over naive ``message motion'' metaphors. However, most event-driven formalisms still assume an intrinsic forward sequence of events.

We separate two notions:

\begin{enumerate}
\item \textbf{Event as distinguishability}: an event is an observationally distinguishable change of local state.
\item \textbf{Event as ordered occurrence}: an event is a node in a global or partially ordered history.
\end{enumerate}

The first does not require FITO. The second does.

\section{Pomsets and Partial Order Semantics}

Pratt pomsets model concurrency as partial orders on events. Pomsets weaken total order assumptions, but typically retain directional precedence. That is a major improvement over a single global timeline, but it still encodes an arrow.

Constraint semantics can be viewed as an additional weakening: it removes the assumption that the semantic primitive is an order at all.

\section{Equivalence Between Message Executions and Constraint Satisfaction}

This section provides the main new result in v1.2.

We focus on a broad class of message-passing systems in which messages are observational traces of local state commitments.

\subsection{Model of a message-passing execution}

Let there be a finite set of processes $P=\{1,\dots,n\}$. Each process $i$ has a local state space $S_i$.

An execution produces a history consisting of local state transitions and message events. We abstract away internal computation details and retain the \emph{observed commitments}:

\begin{itemize}
\item Each process $i$ emits a finite sequence of commitments $s_i^0, s_i^1, \dots, s_i^{k_i}$ where $s_i^j\in S_i$.
\item Messages are not treated as primitive objects but as evidence that certain commitments were communicated.
\end{itemize}

We assume a mild observational condition: local commitments are identifiable (by content hash, monotone counter, nonce, or by physical register observation). This is standard in protocol design.

\subsection{Constraint encoding of a message execution}

Define the global state space $S = S_1\times \cdots \times S_n$.

A run is represented by selecting one committed local state per process, i.e., a valuation
\[
\mathbf{s} = (s_1, s_2, \dots, s_n) \in S.
\]

The protocol imposes cross-process compatibility conditions (for example: ``if $i$ committed to value $v$ under transaction id $x$, then $j$ must either commit to the same $v$ for $x$ or commit to abort for $x$'').

We model these as constraints over tuples of local states.

\begin{definition}[Run constraint set]
Given a protocol and a set of observed commitments, define $C$ to be the set of constraints such that $\mathbf{s}\in S$ is valid iff it satisfies all constraints in $C$.
\end{definition}

This yields a constraint satisfaction problem (CSP): find $\mathbf{s}\in S$ satisfying $C$.

\subsection{The main equivalence theorem}

\begin{theorem}[Message-to-constraint representation]
Consider a message-passing protocol in which (i) each message can be interpreted as a verifiable claim about a sender local commitment, and (ii) the protocol acceptance condition can be expressed as a predicate over the set of local commitments. Then any finite execution trace of the protocol induces a constraint system $(S,C)$ such that the set of accepted outcomes of the execution equals the set of satisfying valuations of $(S,C)$.
\end{theorem}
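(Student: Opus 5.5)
The plan is to make the theorem essentially definitional. I would build $(S,C)$ directly out of hypotheses (i) and (ii) and then check that the two sets coincide by unfolding definitions. First fix a finite execution trace $\tau$. From $\tau$, extract for each process $i$ its finite list of identified commitments $s_i^0,\dots,s_i^{k_i}$; this uses the observational condition that commitments are identifiable. Then let $S_i^\tau=\{s_i^0,\dots,s_i^{k_i}\}\subseteq S_i$ and $S=S_1^\tau\times\cdots\times S_n^\tau$. Restricting to the committed values rather than all of $S_i$ is the first concrete choice. It makes "outcome of this execution" coincide with "valuation of $S$."

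Second, turn each message into a constraint using hypothesis (i). A message $m$ in $\tau$ is a verifiable claim $\phi_m$ about some sender commitment. Define the unary or binary constraint $c_m=\{\mathbf{s}\in S : \phi_m(s_{\mathrm{snd}(m)})\}$, and extend it to a relation on the receiver's coordinate if the receiver's acceptance of $m$ is recorded in its state. Verifiability is what guarantees that $c_m$ is a well-defined subset and not an epistemic guess. Third, use hypothesis (ii) to write the acceptance condition as a predicate $A$ on tuples of local commitments, and set $c_A=\{\mathbf{s}\in S : A(\mathbf{s})\}$. Take $C=\{c_A\}\cup\{c_m : m\in\tau\}$. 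This matches the paper's Run constraint set definition, with $\mathbf{s}$ valid iff it satisfies every member of $C$.

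Fourth, prove equality of the two sets by double inclusion. Suppose $\mathbf{s}$ is an accepted outcome of $\tau$. Then each coordinate is a commitment actually made in $\tau$, so $\mathbf{s}\in S$. Every message's claim is true of it, because the claims were verified in $\tau$, and $A(\mathbf{s})$ holds. Hence $\mathbf{s}$ satisfies $C$. Conversely, suppose $\mathbf{s}$ satisfies $C$. Then its coordinates are commitments of $\tau$, the message claims are consistent with it, and $A(\mathbf{s})$ holds, so the protocol accepts it as an outcome of $\tau$.

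The main obstacle is the converse inclusion. A satisfying valuation might combine commitments from different points in a process's history, for example $s_1^0$ with $s_2^5$. That combination need not correspond to any consistent cut, in which case it would not be an accepted outcome. My first remedy is to let hypothesis (ii) carry the weight: because acceptance is a predicate purely over the set of local commitments, $A$ already rejects inconsistent combinations, and those are exactly what $c_A$ excludes. If that reading proves too weak, I would instead add explicit consistency constraints derived from the messages. For each message $m$ from $i$ carrying $s_i^j$ and received by $k$ before its commitment $s_k^\ell$, I would require that $s_k=s_k^{\ell'}$ with $\ell'\ge\ell$ implies $s_i=s_i^{j'}$ with $j'\ge j$. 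This encodes the happens-before structure symmetrically as compatibility and not as an order, in line with the paper's thesis. I would then check that these constraints carve out exactly the consistent cuts.
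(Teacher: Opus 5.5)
Your overall plan matches the paper's: build the domains from the trace, let hypothesis (ii) supply a global constraint $c_A$, and get the equality by unfolding definitions. Your ``first remedy'' is exactly the paper's argument, which takes $C=\{A\}$ and uses the messages (hypothesis (i)) only to narrow the domains $S_i$. You depart from it by also putting per-message constraints $c_m$ into $C$, and with that addition the inclusion you treated as routine fails. In the forward direction you say every message's claim holds of an accepted $\mathbf{s}$ ``because the claims were verified in $\tau$.'' But verification establishes $\phi_m$ only for the particular commitment $s_i^j$ that $m$ is about. Your $c_m$, however, imposes $\phi_m$ on whichever element of $S_i^\tau=\{s_i^0,\dots,s_i^{k_i}\}$ the valuation selects for the sender. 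Concretely, suppose process $1$ commits value $3$ as $s_1^0$ and announces it in $m_0$, with $\phi_{m_0}(x)$ saying ``$x$ has value $3$'', and later commits value $5$ as $s_1^1$. Then $c_{m_0}$ excludes every valuation that selects $s_1^1$, even when the final states form the accepted outcome. If $s_1^1$ is announced as well, $c_{m_0}\cap c_{m_1}=\emptyset$ and $C$ has no solutions at all. So your satisfying set can be strictly smaller than the accepted set.

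The repair is to drop the $c_m$. Once $S_i^\tau$ is taken to be the identified commitments of the trace, the evidential content of the messages has already been used in building the domains. Then $C=\{c_A\}$ gives both inclusions directly from (ii). You could instead relativize each claim to its own commitment, $c_m=\{\mathbf{s}\in S : s_i=s_i^j\Rightarrow\phi_m(s_i)\}$. That is sound, but by verifiability it equals $S$, so it contributes nothing: a message is evidence about the trace, not a restriction on which tuple is selected.

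The same logic governs your second remedy. The receive-implies-send constraints are a correct encoding of consistent cuts. But when acceptance is \emph{defined} by $A$, as (ii) says, they are either implied by $c_A$ (hence redundant) or they delete accepted tuples. They become genuinely necessary only if ``accepted outcome'' is read as also demanding a cut-consistency that $A$ does not express, a case the paper's proof does not address.
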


\begin{proof}
Let $S_i$ be the set of possible local commitments of process $i$ that are consistent with the local trace. Define $S=S_1\times\cdots\times S_n$.

By assumption (ii), the protocol acceptance condition is a predicate $A(\mathbf{s})$ over local commitments. Construct $C$ as a set of constraints equivalent to $A$ (for example by conjunctive normal form expansion, or by directly taking $C=\{A\}$ as a single global constraint). Then $\mathbf{s}$ is accepted exactly when it satisfies $C$.

Assumption (i) ensures that message observations restrict the possible local commitments included in each $S_i$ (messages act as evidence narrowing local possibilities). Therefore the satisfying valuations of $(S,C)$ coincide with accepted outcomes consistent with the trace.
\end{proof}

This theorem is intentionally broad. It says that when messages are evidence for local commitments, the semantic content of an execution is a CSP, and temporal ordering is not required to define the accepted outcomes.

\begin{theorem}[Constraint-to-message realization]
Let $(S,C)$ be a finite constraint system over $n$ participants with local domains $S_i$. Suppose each participant can propose a local value and can verify constraints involving it by exchanging finite certificates. Then there exists a message-passing protocol whose set of terminal outcomes is exactly the set of satisfying valuations of $(S,C)$.
\end{theorem}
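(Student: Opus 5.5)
The plan is a constructive proof: exhibit a concrete protocol, a ``propose--certify--decide'' scheme, and show that its terminal outcomes are exactly the satisfying valuations of $(S,C)$. The natural protocol has three phases.

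\textbf{Phase 1 (proposal).} Each participant $i$ nondeterministically selects some $s_i\in S_i$. Since $S_i$ is finite, this choice is finite. Participant $i$ then sends to every participant sharing a constraint with it a message that serves as a certificate of its local commitment $s_i$. This is exactly the reading used in the preceding theorem: messages are evidence for local commitments.

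\textbf{Phase 2 (verification).} For each constraint $c\in C$ we designate one responsible participant in its scope. For the single global constraint allowed by the previous proof, this could simply be participant $1$. Once that participant has received the certificates of all participants in the scope of $c$, it evaluates $c$ and broadcasts an accept or reject verdict. The hypothesis that participants ``can verify constraints involving it by exchanging finite certificates'' is what licenses this step.

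\textbf{Phase 3 (decision).} A run terminates in outcome $\mathbf{s}=(s_1,\dots,s_n)$ exactly when every verdict is accept. If some verdict is reject, the run terminates in a distinguished non-outcome state $\bot$. A variant, if one prefers only valid terminal states, is that rejection triggers a retraction and a re-proposal. Since $S$ is finite, one could enumerate $S$ in a fixed order to guarantee termination whenever $C$ is satisfiable.

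The correctness argument then has two inclusions. \emph{Soundness}: if a run terminates with outcome $\mathbf{s}$, every constraint was checked on the certified values and accepted, so $\mathbf{s}$ satisfies $C$. \emph{Completeness}: if $\mathbf{s}$ satisfies $C$, the run in which each $i$ proposes $s_i$ produces only accept verdicts, so $\mathbf{s}$ is a terminal outcome. Finiteness of $C$, of the certificates, and of the proposal phase ensures that every run terminates after finitely many messages.

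The step I expect to be the main obstacle is making ``set of terminal outcomes'' precise in the face of nondeterminism and rejection. Two problems need care. First, rejecting runs must not count as outcomes. Second, certificates must be binding, so that a participant cannot send different values to different verifiers; otherwise soundness fails. For the first, I would define outcomes as the terminal valuations labelled ``accept''. For the second, I would require each participant to broadcast a single certificate, for example a hash-committed value, which is the identifiability assumption of the previous subsection. All verifiers then see the same $s_i$, and the correspondence between accepting runs and satisfying valuations becomes a bijection up to message interleaving. Note that the constraint ordering of the phases is only a presentation device: the outcome set depends only on $C$, matching the paper's claim that temporal order is not needed to define it.
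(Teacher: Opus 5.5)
Your argument is correct, but it realizes the outcome set differently from the paper. The paper sketches a single deterministic, terminating protocol. It sweeps through the candidate valuations of $S$ (exponentially many in the worst case), exchanges certificates or counterexamples for each, and accepts exactly those satisfying $C$. The outcome set is therefore what one execution accepts over the whole sweep. You instead use nondeterministic guess-and-check. Each run commits to a single proposed valuation and needs only one round of certificates and one round of verdicts. The outcome set is then the union, over all runs, of the accept-labelled terminal valuations.

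Your route has two advantages. Soundness and completeness become two one-line inclusions. It also makes explicit two conventions the paper leaves implicit: rejecting runs are not outcomes, and certificates must be binding so that every verifier sees the same $s_i$. The price is reliance on nondeterministic choice. An individual run may end in $\bot$ even when $C$ is satisfiable, and only the exhaustive sweep certifies unsatisfiability within a single execution.

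One caution about your side variant. Suppose rejection leads to re-proposal along a fixed enumeration order that halts at the first satisfying valuation. Then every run ends at the same valuation, the terminal outcome set collapses to a singleton, and completeness fails whenever $C$ has more than one solution. To use enumeration you must either run it to completion and accept every satisfying candidate, as the paper does, or keep the choice of the next untried candidate nondeterministic.
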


\begin{proof}
A standard construction uses distributed constraint satisfaction: each participant proposes $s_i\in S_i$ and iteratively exchanges certificates of constraint satisfaction or counterexamples. Under finiteness, a terminating protocol can enumerate candidate valuations (possibly exponentially in the worst case) and accept exactly those satisfying $C$.

The key point is existential: a message protocol can realize the constraint semantics without needing to treat temporal propagation as fundamental; messages serve to exchange certificates about constraint satisfaction.
\end{proof}

\subsection{Interpretation}

Together, the two theorems justify the central claim:

\begin{quote}
In a large class of systems, \emph{messages are representational artifacts for constraint resolution}, not ontological primitives transporting causality forward in time.
\end{quote}

This does not claim that all protocols are efficiently realizable as CSP solvers. It claims that their semantic content can often be captured as constraint satisfaction independent of an intrinsic arrow of time.

\section{Lamport Clocks Revisited}

Lamport clocks impose ordering constraints to enable reasoning about distributed histories. Their key insight is that ordering is logical, not physical time. This supports our framework: ordering is a derived relation used for reasoning convenience.

However, FITO mistakes that derived relation for ontology.

\section{Actors Revisited}

The actor model treats message passing as the primitive. The constraint view treats actors as local constraint agents and messages as certificates about local commitments. The model remains valid, but the ontological emphasis changes.

\section{Category Theory View}

One can interpret a constraint system as a category of states and relations, or more naturally as a presheaf or sheaf condition over local views in which global consistency corresponds to gluing compatible local sections.

The important point for this paper: temporal direction arises when a particular factorization or path is chosen. It is not intrinsic to the relational structure.

\section{Relativity and the Loss of Absolute Simultaneity}

Special relativity eliminates global simultaneity; different observers assign different time orderings to spacelike separated events. This suggests that a semantics that requires a single privileged temporal order is not physically fundamental.

Constraint relations, by contrast, can be formulated invariantly: they relate observable commitments without requiring absolute simultaneity.

\section{Indefinite Causal Order}

In frameworks that admit indefinite causal order, there may not exist a fixed partial order $\prec$ for certain process compositions. This further undermines FITO as a universal ontology and motivates semantics defined without presupposing a direction.

Constraint semantics remains meaningful even when order is not.

\section{Engineering Interpretation: Open Atomic Ethernet}

Open Atomic Ethernet can be interpreted as pushing the constraint view down to the link layer. Instead of treating a frame as a one-way transport object, the link is treated as a reflective interaction that produces mutual evidence of what was heard and what was committed.

This shifts the primitive from unilateral send to bilateral agreement.

In such a scheme:
\begin{itemize}
\item A ``message'' is not a moving object.
\item It is a negotiated commitment, verifiable at both ends.
\item Link state transitions are constraint resolutions.
\end{itemize}

This reduces dependence on FITO-style timeouts and retry logic that assume unilateral forward progress.

\section{Comparison Table}

\begin{center}
\begin{tabular}{lcccc}
\toprule
Model & Directional primitive & Requires clock & Primitive object & Semantic core \\
\midrule
Message passing (naive) & Yes & Often & Message & Temporal transport \\
Event-driven (typical) & Yes & Often & Event & Ordered transitions \\
Pomsets & Yes (partial) & No & Event & Partial order \\
Constraint semantics & No & No & Constraint & Compatibility/gluing \\
Reflective link (OAE) & No & No & Commitment & Mutual evidence \\
\bottomrule
\end{tabular}
\end{center}

\section{What Is Unnecessary, What Remains}

Under the constraint interpretation, the following are unnecessary as ontology:
\begin{itemize}
\item global time
\item monotone clocks as semantic primitives
\item total ordering of events
\item message motion metaphors
\item unilateral send as primitive
\end{itemize}

What remains:
\begin{itemize}
\item distinguishable local commitments
\item compatibility constraints
\item evidence and verification
\item global consistency as satisfiability
\end{itemize}

\section{Conclusion}

Message passing, as commonly described, embeds a category mistake: it treats logical constraints as temporal transport. FITO assumptions introduce clocks, scheduling, and directed propagation as if they were fundamental. By reformulating interaction in terms of constraint satisfaction among local commitments, we can separate representational convenience from semantic necessity. This shift aligns with insights from Lamport clocks, partial order semantics, relativity, and indefinite causal order, and motivates reflective and reversible engineering approaches such as Open Atomic Ethernet.

\end{document}